\theoremstyle{plain}
\newtheorem{theorem}{Theorem}
\newtheorem{lemma}[theorem]{Lemma}
\begin{document}
\title{On RAC Drawings of Graphs with Two Bends per Edge\thanks{Research on this paper was partially supported by the NSF award DMS~2154347.}}
\author{Csaba D. T\'oth\thanks{California State University Northridge, Los Angeles, CA, USA; and
Tufts University, Medford, MA, USA. Email: \texttt{csaba.toth@csun.edu}}}
\date{}

\maketitle

\begin{abstract}
It is shown that every $n$-vertex graph that admits a 2-bend RAC drawing in the plane, where the edges are polylines with two bends per edge and any pair of edges can only cross at a right angle, has at most $20n-24$ edges for $n\geq 3$. This improves upon the previous upper bound of $74.2n$; this is the first improvement in more than 12 years. A crucial ingredient of the proof is an upper bound on the size of plane multigraphs with polyline edges in which the first and last segments are either parallel or orthogonal.
\end{abstract}

\section{Introduction}
\label{sec:intro}

\textbf{Right-angle-crossing  drawings} (for short, \textbf{RAC drawings}) were introduced by Didimo et al.~\cite{DidimoEL11}. In a RAC drawing of a graph $G=(V,E)$, the vertices are distinct points in the plane, edges are polylines, each composed of finitely many line segments, and any two edges can cross only at a $90^\circ$ angle. For an integer $b\geq 0$, a \textbf{RAC$_b$ drawing} is a RAC drawing in which every edge is a polyline with at most $b$ \emph{bends}; and a \textbf{RAC$_b$ graph} is an abstract graph that admits such a drawing. Didimo et al.~\cite{DidimoEL11} proved that every RAC$_0$ graph on $n\geq 4$ vertices has at most $4n-10$ edges, and this bound is tight when $n=3h-5$ for all $h\geq 3$; see also \cite{DujmovicGMW11}. They also showed that every graph is a RAC$_3$ graph.

Angelini et al.~\cite{AngeliniBFK20} proved that every RAC$_1$ graph on $n$ vertices has at most $5.5n-O(1)$ edges, and this bound is the best possible up to an additive constant.
The only previous bound on the size of RAC$_2$ graphs is due to Arikushi et al.~\cite{ArikushiFKMT12}: They showed that every RAC$_2$ graph on $n$ vertices has at most $74.2n$ edges, and constructed RAC$_2$ graphs with $\frac{47}{6}n-O(\sqrt{n})>7.83n-O(\sqrt{n})$ edges.
Recently, Angelini et al.~\cite{AgeliniBK00U23} constructed an infinite family of RAC$_2$ graphs with $10n-O(1)$ edges;
and conjectured that this lower bound is the best possible.

See recent surveys~\cite{Didimo20,DidimoLM19} and results~\cite{AngeliniBKKP22,Forster020,RahmatiE20,Schaefer21} for other aspects of RAC drawings. The concept of RAC drawings was also generalized to angles other than $90^\circ$, and to combinatorial constraints on the crossing patterns in a drawing~\cite{AckermanFT12}.

The main result of this note is the following theorem.

\begin{theorem}\label{thm:main}
  Every RAC$_2$ graph with $n\geq 3$ vertices has at most $20n-24$ edges.
\end{theorem}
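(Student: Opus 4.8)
I would begin with a RAC$_2$ graph $G=(V,E)$ on $n$ vertices, set $m=|E|$, and fix a RAC$_2$ drawing of $G$ with the fewest crossings, reduced to general position: after a perturbation no two of the vertices, bends and crossings coincide, no crossing is a vertex or a bend, and no three edges meet at a point; together with the standard normalizations for RAC$_b$ drawings (no two edges cross twice, adjacent edges do not cross) this makes each edge a polyline $s_1 s_2 s_3$ whose \emph{legs} $s_1,s_3$ are incident to the endpoints and whose \emph{middle} segment is $s_2$. The only property of RAC drawings I would use is local: every segment that crosses a fixed segment $s$ is orthogonal to $s$, hence all segments crossing $s$ are mutually parallel. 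Thus, in the conflict graph on the at most $3m$ segments (adjacency $=$ crossing), each connected component is the union of two families of mutually parallel, mutually orthogonal segments, with all crossings between the families; in particular, the crossings along any one segment form a \emph{bundle} of parallel segments pierced transversally.

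The heart of the argument, as the abstract indicates, is to convert the drawing into a \emph{plane} multigraph whose polyline edges all have first and last segments that are parallel or orthogonal, and then to invoke the density bound for such multigraphs. The idea is to uncross the drawing bundle by bundle: whenever a segment of an edge $e$ pierces a bundle of parallel segments, reroute $e$ so that it runs \emph{alongside} the bundle and then rejoins its course; since the bundle consists of parallel segments, this can be carried out without ever creating a crossing at an angle other than $0^\circ$ or $90^\circ$, and, by sliding every such detour into the interior of the edge, each edge keeps its initial and final portions pointing in their original directions, which — after a normalization forced by the bundle structure — are parallel or orthogonal. Performing this for all bundles turns the RAC$_2$ drawing into a bounded number of plane multigraphs of the required special type; applying the auxiliary lemma to each and summing the resulting bounds yields $m\le 20n-24$. (Small $n$, where $20n-24$ dwarfs $\binom n2$, is immediate.)

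The main obstacle is the auxiliary lemma itself: an upper bound on the number of edges of a plane multigraph with polyline edges whose first and last segments are parallel or orthogonal. I would prove it by planarizing, applying Euler's formula, and then discharging, the crucial point being that the parallel-or-orthogonal constraint rules out long families of homotopic parallel edges between a fixed pair of vertices — the directions in which such edges would have to leave and enter their endpoints cannot be reconciled with the nesting order in the plane — so edge multiplicities, and hence the total count, remain linear with the stated constant. A second, deceptively delicate point is to make the uncrossing step rigorous: that the bundle-detours of distinct edges can be nested consistently, that the operation neither deletes nor merges edges, and that the multigraph it produces genuinely satisfies the hypothesis of the lemma and is counted correctly against the $n$ vertices of $G$.
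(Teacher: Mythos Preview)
Your plan has a real gap at the uncrossing step. After your detours, each edge of the resulting plane multigraph still begins and ends with its original end segments $s_1$ and $s_3$. But nothing in a RAC$_2$ drawing forces the two end segments of a single edge to be parallel or orthogonal: $s_1$ and $s_3$ may lie in different blocks (or in none), so their directions are unrelated. Hence the multigraph you produce is not ortho-fin, and the auxiliary lemma does not apply to it. The ``normalization forced by the bundle structure'' you invoke does not exist for end segments of the same edge; it only holds for segments that live in a common block.

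The paper avoids this by never trying to make whole edges ortho-fin. Instead, within each block it takes a spanning tree of the block's arrangement and pairs up the end segments (terminals) into $\lfloor \mathrm{end}(B)/2\rfloor$ edge-disjoint tree paths. Each such path stays inside one block, so its first and last segments \emph{are} parallel or orthogonal; together these paths form the plane ortho-fin multigraph $H$ on $V$, and each path \emph{represents} at most two edges of $G$. The ortho-fin bound $|E(H)|\le 5n-2$ then accounts for at most $10n-4$ edges of $G$. The edges not represented---at most one per block---still cross, and you need a second, entirely different ingredient that your plan omits: split them by the sign of their middle-segment slope, observe that in each half all crossings are end--middle with every middle segment crossed at most once, so each half is $1$-gap planar and has at most $5n-10$ edges. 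The total $10n-4+2(5n-10)=20n-24$ is where the constant comes from; no combination of copies of the ortho-fin bound alone yields it.

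A smaller issue: your sketch of the ortho-fin lemma via edge multiplicities only gives $O(n)$ with a poor constant (the paper notes $27n-48$ this way). The $5n-2$ bound needs the angle-potential argument: the sum of interior angles at vertices in each face is a nonzero multiple of $\pi/2$, so the number of faces is at most $4n$, and Euler finishes.
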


This improves upon the upper bound $24n-26$ in the conference version of this paper~\cite{Toth23},
which in turn was the first improvement on the size of RAC$_2$ graphs in more than 12 years.

\paragraph{Related Results and Open Problems.}
Several special cases of the problem have also been considered:  A drawing of a graph is \emph{simple} if any pair of edges share at most one point, which may be a crossing or a common endpoint. In a non-simple drawing, a \emph{lens} is a region bounded by a closed Jordan curve, comprised of two Jordan arcs, each of which is part of the drawing of an edge. A drawing of a graph is \emph{non-homotopic} if the interior of every lens contains a vertex or a crossing. Note that every simple drawing is non-homotopic (since it does not contain any lens). In the \emph{general} case (e.g., the setting of Theorem~\ref{thm:main}), there are no such restrictions on the drawings.

Recently, Kaufmann et al.~\cite{kaufmann2023density} proved an upper bound of $10n-19$ for the number of edges in non-homotopic RAC$_2$ drawings with $n\geq 3$ vertices. They also constructed a simple RAC$_2$ drawing with $10n-O(1)$ edges and $n=k^2+8$ vertices for all $k\geq 1$ (the earlier lower bound construction for $10n-O(1)$ by Angelini et al.~\cite{AgeliniBK00U23} was neither simple nor non-homotopic). Thus the bound $10n-O(1)$ is tight for non-homotopic and for simple RAC$_2$ drawings.
It is also known that every simple (resp., non-homotopic) RAC$_1$ drawing with $n$ vertices has at most $5n-O(1)$ edges, and this bound is the best possible~\cite{AngeliniBFK20,kaufmann2023density}.

Schaefer~\cite{Schaefer21} proved that recognizing RAC$_0$ graphs is $\exists\mathbb{R}$-complete (this problem was previously known to be NP-hard~\cite{ArgyriouBS12}). It is also $\exists\mathbb{R}$-complete to decide whether a graph admits a RAC$_0$ drawing isomorphic to a given drawing in which every edge has at most eleven crossings.
It is an open problem whether RAC$_1$ and RAC$_2$ graphs can be recognized efficiently; this problem is open even
if all crossing edge pairs are given.

\section{Multigraphs with Angle-Constrained End Segments}
\label{sec:tech}

A plane multigraph $G=(V,E)$ is a multigraph embedded in the plane such that the vertices are distinct points, and the edges are Jordan arcs between the corresponding vertices (not passing through any other vertex), and any pair of edges may intersect only at vertices. The \emph{multiplicity} of an edge  between vertices $u$ and $v$ is the total number of edges in $E$ between $u$ and $v$.

We define a \textbf{plane ortho-fin multigraph} as a plane multigraph $G=(V,E)$ such that every edge $e\in E$ is a polygonal path $e=(p_0,p_1,\ldots , p_k)$ where the first and last edge segments are either parallel or orthogonal, that is, $p_0p_1\| p_{k-1}p_k$ or $p_0p_1\perp p_{k-1}p_k$. See Fig.~\ref{fig:1} for  examples.

\begin{figure}[htbp]
\centering
\includegraphics[width=0.8\textwidth]{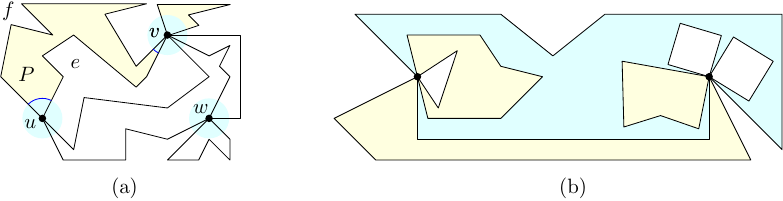}
\caption{(a) A plane ortho-fin multigraph with 3 vertices and 7 edges.
The parallel edges $e$ and $f$ form a simple polygon $P$ with potential $\Phi(P)=\pi/2$.
(b) A plane ortho-fin multigraph with 2 vertices and 7 edges.} \label{fig:1}
\end{figure}

It is not difficult to see that in a plane ortho-fin multigraph, every vertex is incident to at most three loops and the multiplicity of any edge between two distinct vertices is at most eight; both bounds can be attained (see Fig.~\ref{fig:tight} below for examples). Combined with Euler's formula, this would already give an upper bound of $3n+8(3n-6)=27n-48$ for the size of a plane ortho-fin multigraph with $n\geq 3$ vertices. In this section, we prove a tight bound of $5n-2$ (Theorem~\ref{thm:orthogonal})\footnote{The conference version of this paper~\cite{Toth23} established
a weaker upper bound of $7n-3$.}.
The key technical tool is the following.
For a face $P$ of a plane multigraph $G=(V,E)$, let the \textbf{potential} $\Phi(P)$ be the sum of interior angles of $P$ over all vertices in $V$ incident to $P$.

\begin{lemma}\label{lem:angle}
Let $G=(V,E)$ be a plane ortho-fin multigraph. Then for every face $P$, the potential $\Phi(P)$ is a multiple of $\pi/2$.
\end{lemma}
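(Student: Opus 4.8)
The plan is to run a turning-angle (Hopf \emph{Umlaufsatz}) argument along the boundary walk of $P$. The only non-elementary input is that the ortho-fin condition lets us control the \emph{total} turning contributed by the bends of each edge, even though the individual bend angles are arbitrary, because that total depends only on the directions of the first and last segments.

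Fix a face $P$ and traverse its boundary as a closed walk, keeping $P$ on the left (if $\partial P$ has several components, do this for each component). Regard this as a closed polygonal curve whose corners are of two kinds: \emph{vertex corners}, located at points of $V$, and \emph{bend corners}, located at the interior bend points $p_1,\dots,p_{k-1}$ of the edges $e=(p_0,p_1,\dots,p_k)$. At every corner the curve turns by the exterior angle, and since the tangent direction returns to its starting value along each closed component, the exterior angles around that component sum to an integer multiple of $2\pi$. Hence the sum $T$ of the exterior angles over all corners of $\partial P$ satisfies $T\equiv 0\pmod{2\pi}$; in particular $T$ is a multiple of $\pi/2$.

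Now I would estimate the bend contribution. Group the bend corners by the edge traversal they belong to. For one traversal of $e=(p_0,\dots,p_k)$, say from $p_0$ to $p_k$, the exterior angles at $p_1,\dots,p_{k-1}$ telescope (tracking the direction angle continuously along $e$) to $\angle(p_k-p_{k-1})-\angle(p_1-p_0)$ modulo $2\pi$. Since $G$ is a plane ortho-fin multigraph, $p_0p_1\parallel p_{k-1}p_k$ or $p_0p_1\perp p_{k-1}p_k$, so these two direction angles differ by a multiple of $\pi/2$; thus the turning along this traversal is a multiple of $\pi/2$ (the same holds when $e$ is traversed from $p_k$ to $p_0$, and an edge with no bend contributes $0$). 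Summing over all edge traversals appearing in $\partial P$, the total turning $T_{\mathrm{bend}}$ at bend corners is a multiple of $\pi/2$. Letting $c$ denote the number of vertex corners of $\partial P$ and using that the exterior angle at a vertex corner equals $\pi$ minus the interior angle of $P$ there, the vertex corners contribute $T-T_{\mathrm{bend}}=c\pi-\Phi(P)$ (here $\Phi(P)$ is exactly the sum of the interior angles of $P$ over all vertex corners). Hence $\Phi(P)=c\pi-T+T_{\mathrm{bend}}$, and since $c\pi$, $T$, and $T_{\mathrm{bend}}$ are all multiples of $\pi/2$, so is $\Phi(P)$.

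The main obstacle is the bookkeeping for faces whose boundary is not a simple cycle: bridges (where the walk backtracks, producing a $\pm\pi$ ``spike''), cut vertices, loops at a vertex, and the case where $\partial P$ has several components; one must pin down the boundary walk and the turning number carefully in these situations. This is not a real difficulty, though: a $\pm\pi$ spike is itself a multiple of $\pi/2$, a loop is handled like any other edge traversal, and each boundary component still contributes a multiple of $2\pi$, so the computation above applies unchanged.
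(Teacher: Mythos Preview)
Your argument is correct and essentially complete; the only places that need a touch more care are the ones you already flag (spikes and multiple boundary components), and your remarks there are accurate: the $\pm\pi$ ambiguity at a spike is itself a multiple of $\pi/2$, so it cannot affect the conclusion.

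Your route is genuinely different from the paper's. The paper proves the lemma by a case analysis on the number of vertices of $V$ on the boundary of $P$ (one, two, or at least three). In the one- and two-vertex cases it computes the relevant angles directly; in the $\geq 3$ case it starts from the straight-line polygon on those vertices (whose angle sum is a multiple of $\pi$) and swaps in the ortho-fin edges one at a time, observing that each swap changes the angle sum by a multiple of $\pi/2$. It then handles non-simple boundaries by decomposing $\partial P$ into simple and degenerate cycles. Your turning-angle argument replaces all of this with a single computation: the total turning is a multiple of $2\pi$, the bend turning along each edge traversal is a multiple of $\pi/2$ by the ortho-fin condition, and the vertex turning is $c\pi-\Phi(P)$. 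This is shorter and avoids the case split; the paper's approach, on the other hand, is perhaps more explicit about what happens geometrically in the small cases and sidesteps any discussion of winding numbers or lifting conventions.
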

\begin{proof}
Let $G=(V,E)$ be a plane ortho-fin multigraph with $n$ vertices. We may assume w.l.o.g.\ that $G$ is connected.
Since $\Phi(P)$ is determined by the edges and vertices of $G$ on the boundary of $P$,
we may assume that all edges and vertices of $G$ are incident to $P$.

\paragraph{Cycle Multigraphs.}
Assume first that $G$ is a cycle (possibly a loop or a double edge); this assumption is dropped later.
We may further assume, w.l.o.g., that $P$ lies in the interior of the cycle $G$. Indeed, denote the interior and exterior face of $G$ by $P_{\rm  int}$ and $P_{\rm ext}$, respectively. At each vertex of $G$, the interior and exterior angles sum to $2\pi$. Consequently, $\Phi(P_{\rm int})+\Phi(P_{\rm ext})=2\pi\cdot n$. It follows that if $\Phi(P_{\rm int})$ is a multiple of $\pi/2$, then so is $\Phi(P_{\rm ext})$.

We distinguish between three cases based on the number of vertices in $V$.

\noindent (1) Assume that $G$ has only one vertex, denoted $v\in V$. Then $P$ is bounded by a counterclockwise loop $e=(p_0,p_1,\ldots , p_k)$ incident to the vertex $v=p_0=p_k$. Since $p_0p_1\| p_{k-1}p_k$ or $p_0p_1\perp p_{k-1}p_k$, then the interior angle of $P$ at $v$ is $\pi/2$, $\pi$, or $3\pi/2$. We see that $\Phi(P)$ is a multiple of $\pi/2$.

\noindent (2) Assume that $G$ has two vertices, $u,v\in V$.
Then $P$ is bounded by parallel edges $e=(p_0,p_1,\ldots , p_k)$ and $f=(q_0,q_1,\ldots , q_\ell)$ between $u=p_0=q_0$ and $v=p_k=q_\ell$. Assume w.l.o.g.\ that $e$ is oriented counterclockwise along $P$; consequently, $f$ is oriented clockwise. The interior angles of $P$ at $u$ and $v$ are $\angle p_1uq_1$ and $\angle q_{\ell} v p_k$. Note, in particular, that $\Phi(P)=\angle p_1uq_1+\angle q_{\ell} v p_k$, and $\Phi(P)$ depends only on the directions of the vectors $\overrightarrow{up_1}$, $\overrightarrow{uq_1}$, $\overrightarrow{vp_k}$, and $\overrightarrow{v q_{\ell}}$. If $\overrightarrow{up_1}$ and $\overrightarrow{vp_k}$ have the same direction, and so do $\overrightarrow{u q_1}$ and $\overrightarrow{v q_{\ell}}$, then $\angle p_1uq_1+\angle q_{\ell} v p_k=\angle p_1uq_1+\angle q_1 u p_1=2\pi$. In general, the directions of $\overrightarrow{up_1}$ and $\overrightarrow{vp_k}$ (resp., $\overrightarrow{u q_1}$ and $\overrightarrow{v q_{\ell}}$) differ by a multiple of $\pi/2$. Consequently, $\Phi(P)=\angle p_1uq_1+\angle q_{\ell} v p_k$ is also a multiple of $\pi/2$.

\noindent (3) Let $P$ be bounded by a simple closed curve $\gamma$ that passes through $k\geq 3$ vertices, $v_1,\ldots , v_k\in V$, in counterclockwise order.
Let $\gamma'=(v_1,\ldots , v_k)$ be a (not necessarily simple) polygonal curve, with straight-line edges between consecutive vertices in $V$. Then the sum of angles on the left side of $\gamma'$ at the vertices is $V$ is a multiple of $\pi$. We can transform $\gamma'$ to $\gamma$ by successively replacing the straight-line edges $v_iv_{i+1}$ with the corresponding polyline edges of the ortho-fin multigraph $G$. If the first and last edge of the ortho-fin edge between $v_i$ and $v_{i+1}$ have the same direction, then replacing the straight-line edge with such an ortho-fin edge does not change the sum of interior angles. In any other case, the sum of interior angles changes by a multiple of $\pi/2$.
Consequently, the sum of angles over vertices in $V$ in the polygon in the interior of $\gamma$ is also a multiple of $\pi/2$. This proves that $\Phi(P)$ is a multiple of $\pi/2$.

\begin{figure}[htbp]
\centering
\includegraphics[width=0.95\textwidth]{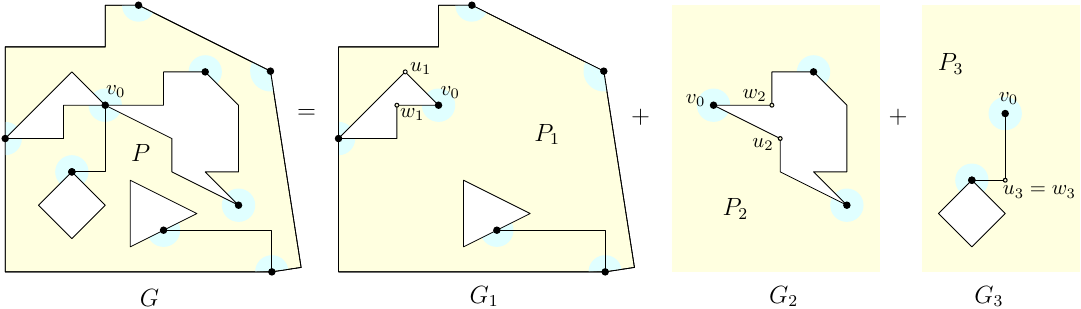}
\caption{A plane ortho-fin multigraph $G$ with a cut vertex $v_0$; and its decomposition into three plane ortho-fin multigraphs $G_1$, $G_2$ and $G_3$. Face $P$ is the intersection of a bounded face $P_1$ of $G_1$ and two unbounded faces $P_2$ and $P_3$ of $G_2$ and $G_3$, respectively.} \label{fig:new}
\end{figure}

\paragraph{General Case.}
It remains to address the general case when $G$ is not necessarily a cycle. We proceed by induction on the number of cut vertices of $G$. In the base case, $G$ does not have any cut vertices, so it is a simple cycle.

For the induction step, assume that $G$ has $c\geq 1$ cut vertices. Let $v_0\in V$ be a cut vertex; see Fig.~\ref{fig:new}. Then $G$ decomposes into $k\geq 2$ maximal multigraphs $G_1,\ldots , G_k$, in which $v_0$ is not a cut vertex. Clearly, $v_0$ is the only common vertex of any two of these multigraphs, and they each have fewer than $c$ cut vertices. The face $P$ of $G$ is contained in some face of every sub-multigraph. Let $P_i$ denote the face of $G_i$ such that $P\subset P_i$ for all $i\in \{1,\ldots ,k\}$. Since $v_0$ is not a cut vertex of $G_i$, then $P_i$ has a unique interior angle incident to $v_0$, $\angle u_i v_0 w_i$ (possibly $u_i=w_i$), which contributes to $\Phi(P_i)$. The exterior angles $\angle w_i v_0 w_i$ are pairwise disjoint, and $P$ has $k$ disjoint angles at $v_0$. Since $P\subset P_i$ for all $i\in \{1,\ldots , k\}$, then all $k$ interior angles of $G$ at $v_0$ are contained in $\angle u_i v_0 w_i$. Using inclusion-exclusion, they sum to $2\pi - \sum_{i=1}^k (2\pi - \angle u_i v_0 w_i) = (\sum_{i=1}^k \angle  u_i v_0 w_i ) - (k-1) 2\pi$. Since $v_0$ is the only common vertex of $G_1,\ldots , G_k$, then $\Phi(P)=(\sum_{i=1}^k \Phi(P_i)) - (k-1) 2\pi$. By induction, $\Phi(P_1),\ldots , \Phi(P_k)$ are multiples of $\pi/2$, consequently $\Phi(P)$ is also a multiple of $\pi/2$. This completes the induction step, hence the entire proof.
\end{proof}

\begin{theorem}\label{thm:orthogonal}
Every plane ortho-fin multigraph on $n\geq 1$ vertices has at most $5n-2$ edges, and this bound is the best possible.
\end{theorem}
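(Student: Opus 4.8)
The plan is to derive the upper bound from Euler's formula together with Lemma~\ref{lem:angle}, and then to exhibit a matching family of extremal examples.

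For the upper bound I would first reduce to the connected case: if $G$ has components with $n_1,\ldots,n_c$ vertices then it suffices to bound each component, since $\sum_i (5n_i-2)=5n-2c\le 5n-2$; a component that is a single isolated vertex trivially has $0\le 3$ edges. So assume $G=(V,E)$ is connected with $|V|=n\ge 1$, and (the empty‑edge case being trivial) that $|E|\ge 1$, so that Euler's formula reads $n-|E|+f=2$ for the number $f$ of faces. The crux is the double‑counting identity
\[
\sum_{P\text{ face of }G}\Phi(P)=2\pi n .
\]
This holds because around each vertex $v\in V$ the corners of the incident faces partition the full angle $2\pi$ (a face incident to $v$ at several corners contributing each of them), so the corner angles at $v$ sum to $2\pi$; summing over all $v$ and regrouping by faces gives the claim. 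Since $G$ is connected with $|E|\ge 1$, every face has an edge on its boundary and hence an incident vertex, so $\Phi(P)>0$ for every face $P$. By Lemma~\ref{lem:angle} each $\Phi(P)$ is a multiple of $\pi/2$, hence in fact $\Phi(P)\ge\pi/2$. Therefore $2\pi n=\sum_P\Phi(P)\ge\frac{\pi}{2}f$, i.e.\ $f\le 4n$, and Euler's formula yields $|E|=n+f-2\le 5n-2$.

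For tightness I would construct, for every $n\ge 1$, a connected plane ortho‑fin multigraph with exactly $5n-2$ edges; by the computation above this forces $f=4n$ and $\Phi(P)=\pi/2$ for every face. For $n=1$, take three nested loops at a single vertex $v$ routed with axis‑parallel segments so that the six edge‑ends at $v$ leave in directions for which the four face‑wedges at $v$ (inside the innermost loop, in each of the two annular faces, and in the outer face) have measures $\tfrac{\pi}{2}$, $\tfrac{\pi}{4}+\tfrac{\pi}{4}$, $\tfrac{\pi}{4}+\tfrac{\pi}{4}$, $\tfrac{\pi}{2}$; one checks the first and last segment of each loop are then orthogonal, giving $4$ faces and $3=5\cdot1-2$ edges. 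For $n=2$ the extremal example is $8$ parallel ortho‑fin edges between two vertices. In general one chains such gadgets: inserting a new vertex inside a face $P$ with $\Phi(P)=\pi/2$ and joining it by five ortho‑fin edges to the vertices on $\partial P$ so as to split $P$ into five faces keeps all potentials equal to $\pi/2$, the bookkeeping being consistent since the five new wedges at the inserted vertex sum to $2\pi=5\cdot\tfrac{\pi}{2}-\tfrac{\pi}{2}$.

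I expect the main obstacle to be not the upper bound — which, given Lemma~\ref{lem:angle}, collapses to the single observation $\Phi(P)\ge\pi/2$ plus Euler's formula — but the extremal construction: realizing \emph{every} face with potential exactly $\pi/2$ simultaneously while keeping the drawing plane and every first/last segment pair parallel or orthogonal requires choosing explicit coordinates and verifying planarity, which is tedious though conceptually routine.
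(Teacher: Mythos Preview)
Your proposal is correct and follows essentially the same route as the paper: the upper bound is obtained exactly as you describe (Lemma~\ref{lem:angle} gives $\Phi(P)\ge\pi/2$, double counting gives $\sum_P\Phi(P)=2\pi n$, hence $f\le 4n$, and Euler's formula finishes), with the same reduction to the connected case. For tightness the paper also builds graphs in which every face has potential exactly $\pi/2$, but its inductive step is more concrete than yours: from $G_{n-1}$ it adds a new vertex $v_n$, two straight edges $v_1v_n$ and $v_2v_n$, and three square-shaped loops at $v_n$, which makes the ortho-fin property and planarity immediate and avoids the coordinate-checking you anticipate.
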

\begin{proof}
Let $G=(V,E)$ be a plane ortho-fin multigraph, and denote its faces by $P_1,\ldots , P_f$.
Assume first that $G$ is connected.
Lemma~\ref{lem:angle} implies $\Phi(P_i)\geq \pi/2$ for all $i\in \{1,\ldots , f\}$.
Since the faces of $G$ have pairwise disjoint interiors, then at each vertex $v\in V$, the interior angles at $v$ over all faces sum to $2\pi$.
Summation of the potential over all faces yields
$f\cdot \frac{\pi}{2}\leq \sum_{i=1}^f \Phi(P_i) = 2\pi\cdot n$,
which implies $f\leq 4n$. We combine this inequality with Euler's polyhedron formula, $n-|E|+f=2$
(which holds for connected multigraphs), and obtain $|E|=n+f-2\leq 5n-2$.

It remains to consider the case that $G$ is disconnected. Assume that $G$ has $k$ components with
$n_1,\ldots n_k$ vertices, resp., where $\sum_{i=1}^k n_i=n$.
Each component is an ortho-fin multigraph. Summation of the above bound over all components gives
$|E|=\sum_{i=1}^k (5n_i-2)=5(\sum_{i=1}^k n_i)-2k\leq 5n-2$, as claimed.

\begin{figure}[htbp]
\centering
\includegraphics[width=0.65\textwidth]{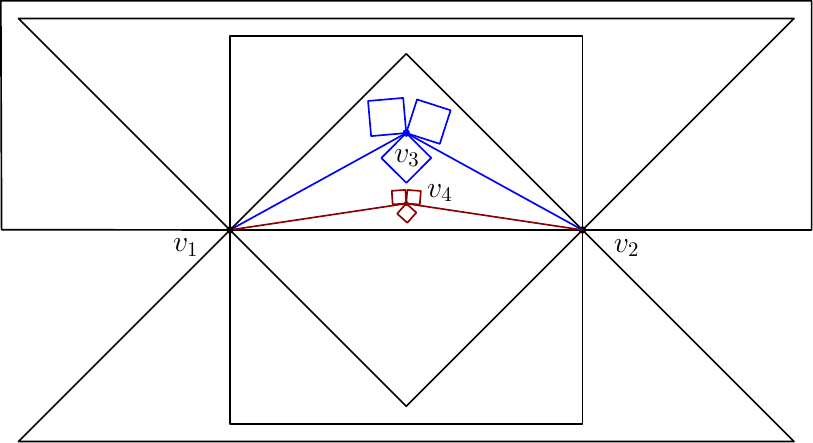}
\caption{Construction for plane ortho-fin multigraphs with $n=2,3,4$ vertices and $5n-2$ edges.} \label{fig:tight}
\end{figure}

For a matching lower bound, we construct plane ortho-fin multigraphs with $n$ vertices and $5n-2$ edges for $n\geq 1$.
It is enough to construct ortho-fin multigraphs in which the potential of every face is precisely $\pi/2$.
For $n=1$, consider a single vertex $v$ and three loops, each of which is a unit square with a corner at $v$.
For $n=2$, let $G_2$ be the orho-fin multigraph on vertex set $\{v_1,v_2\}$ and eight parallel edges as shown in Fig.~\ref{fig:tight}.
For every $n\geq 3$, we construct $G_n$ from $G_{n-1}$ by adding vertex $v_n$ and five new edges, as indicated in Fig.~\ref{fig:tight}:
Two straight-line edges $v_1v_n$ and $v_2v_n$, and three square-shaped loops incident to $v_n$: one loop inside the obtuse triangle $\Delta(v_1 v_2 v_n )$ and two loops outside. By induction, $G_n$ is a plane ortho-fin multigraph with $n$ vertices and $5n-2$ edges.
\end{proof}

\section{Proof of Theorem~\ref{thm:main}}
\label{sec:main}

Let $G=(V,E)$ be a RAC$_2$ drawing with $n\geq 3$ vertices. Assume w.l.o.g.\ that every edge has two bends (by subdividing edge segments if necessary), and the middle segment of every edge has positive or negative slope (not 0 or $\infty$), by rotating the entire drawing by a small angle if necessary. Each edge has two \textbf{end segments} and one \textbf{middle segment}. We classify crossings as \textbf{end-end}, \textbf{end-middle}, and \textbf{middle-middle} based on the crossing segments.

Arikushi et al.~\cite{ArikushiFKMT12} defined a ``block'' on the set of $3|E|$ edge segments. First define a symmetric relation on the edge segments: $s_1\sim s_2$ iff $s_1$ and $s_2$ cross. The transitive closure of this relation is an equivalence relation. A \textbf{block} is the set of segments in an equivalence class. Equivalently, two edge segments, $s_a$ and $s_b$, are in the same block if there exists a sequence of segments $(s_a=s_1, s_2,\ldots, s_t=s_b)$ such that any two consecutive segments cross (necessarily at $90^\circ$ angle).
Note each block consists of segments of exactly two orthogonal directions, and the union of segments in a block is connected; see Fig.~\ref{fig:2}(a) for examples.

\begin{figure}[htbp]
\centering
\includegraphics[width=0.95\textwidth]{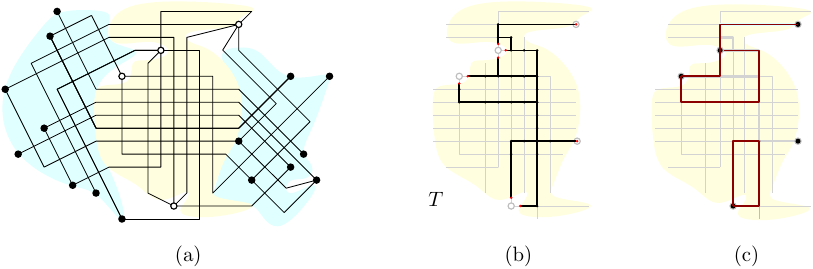}
\caption{(a) Three blocks in a RAC$_2$ drawing.
(b) A spanning tree $T$, after splitting five terminals in $V$ into nine terminals (red dots).
(c) A plane ortho-fin multigraph on the five terminals in $V$.} \label{fig:2}
\end{figure}

\subsection{Matching End Segments in Blocks}
\label{ssec:index2}

Let $B$ be a block of $G=(V,E)$ (refer to Fig.~\ref{fig:2}(a)). Denote by $\mathrm{End}(B)$ the set of end segments in $B$, and let $\mathrm{end}(B)=|\mathrm{End}(B)|$. The segments in $B$ form a connected \textbf{arrangement} $A$, which is a plane straight-line graph: The \emph{vertices} of $A$ are the segment endpoints and all crossings in $B$, and the edges of $A$ are maximal sub-segments between consecutive vertices of $A$. We call a vertex of $A$ a $\textbf{terminal}$ if it is a vertex in $V$ (that is, an endpoint of some edge in $E$). If a terminal $p$ is incident to $k>1$ edges of the arrangement $A$, we shorten these edges in a sufficiently small $\varepsilon$-neighborhood of $p$, and split $p$ into $k$ terminals (Fig.~\ref{fig:2}(b)). We may now assume that each terminal has degree 1 in $A$, hence there are $\mathrm{end}(B)$ terminals in $A$.

Let $T$ be a minimum tree in $A$ that spans all terminals. It is well known that one can find $\lfloor \frac12 \mathrm{end}(B)\rfloor$ pairs of terminals such that the (unique) paths between these pairs in $T$ are pairwise edge-disjoint (e.g., take a minimum-weight matching of $\lfloor \frac12 \mathrm{end}(B)\rfloor$ pairs of  terminals). If $k>1$ terminals correspond to the same vertex $p\in V$, then we can extend the paths by $\varepsilon>0$ to $p$, and the extended paths are still edge-disjoint.
Let $\mathcal{E}(B)$ be the set of these paths (Fig.~\ref{fig:2}(c)).

Let $\mathcal{E}$ be the union of the sets $\mathcal{E}(B)$ over all blocks $B$; see Fig.~\ref{fig:3}(left).
Since every path in $\mathcal{E}$ is a simple polygonal path between points in $V$, it can be interpreted as the drawing of an edge in a multigraph on the vertex set $V$, and so $\mathcal{E}$ is a set of edges on $V$.
With this interpretation, $H=(V,\mathcal{E})$ is a plane ortho-fin multigraph. Indeed, the edges of $H$ are paths in $\mathcal{E}$. This means that no two edges of $H$ cross. Each edge of $H$ is a path within the same block, and so the first and last segment of each edge of $H$ are either parallel or orthogonal to each other.

We say that an edge $e\in \mathcal{E}$ \textbf{represents} an edge $f\in E$ if the first or last edge segment of $f$ contains the first or last edge segment of the edge $e$. By definition, each edge $e\in \mathcal{E}$ represents at most two edges in $E$.

\begin{figure}[htbp]
\centering
\includegraphics[width=\textwidth]{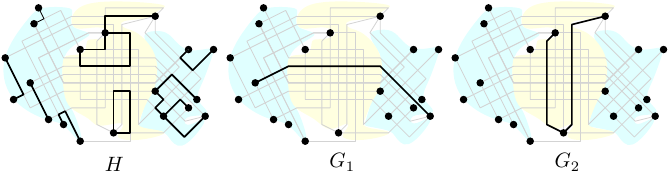}
\caption{Graphs $H$ (left), $G_1$ (middle) and $G_2$ (right) for the RAC$_2$ drawing in Fig.~\ref{fig:2}.
The original RAC$_2$ drawing is shown in light gray for comparison.} \label{fig:3}
\end{figure}

By Theorem~\ref{thm:orthogonal}, the graph $H=(V,\mathcal{E})$ has at most $5n-2$ edges, and so it represents at most $2(5n-2)=10n-4$ edges of $G$.

\subsection{Gap Planar Graphs}

Let $E_0\subset E$ be the set of edges in $G=(V,E)$ that are not represented in $H$, and let
$G_0=(V,E_0)$. Clearly $G_0$ is a RAC$_2$ drawing with $n$ vertices.

\begin{lemma}\label{lem:nonrep}
In the drawing $G_0=(V,E_0)$, there is no end-end crossing, and each middle segment is crossed by at most one end segment.
\end{lemma}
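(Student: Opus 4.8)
The plan is to argue by contradiction: I will show that any end-end crossing, or any middle segment crossed by two distinct end segments, would force one of the participating edges to be represented in $H$, contradicting the definition of $E_0$.

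First consider an end-end crossing in $G_0$. Let $s$ be an end segment of an edge $e \in E_0$ and $s'$ an end segment of an edge $f \in E_0$ (possibly $e=f$), and suppose $s$ and $s'$ cross. Both segments lie in the same block $B$, so they appear in the arrangement $A$ of $B$, and the endpoints of $e$ and $f$ incident to $s$ and $s'$ respectively give rise to terminals of $A$. The key point is that the matching paths in $\mathcal{E}(B)$ are chosen to be edge-disjoint in the spanning tree $T$, but they need not cover every terminal; however, I would like to claim that if $s$ and $s'$ cross, then at least one of $e,f$ is represented by some path of $\mathcal{E}(B)$. The cleanest route is to re-examine how $\mathcal{E}(B)$ is constructed: $|\mathrm{End}(B)|$ terminals yield $\lfloor \tfrac12\,\mathrm{end}(B)\rfloor$ edge-disjoint paths, so at most one terminal is left unmatched. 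If $s$ is the end segment of $e$ and its terminal is matched, then the path of $\mathcal{E}(B)$ through that terminal has its first segment inside $s$, hence represents $e$; similarly for $f$. Since at most one terminal is unmatched, $e$ and $f$ cannot both be unrepresented — so at least one of them lies outside $E_0$, a contradiction. (If $e=f$, i.e. the two end segments of a single edge cross inside one block, the same counting applies to the single offending terminal.)

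For the second claim, suppose a middle segment $m$ of an edge $g \in E_0$ is crossed by two distinct end segments $s_1, s_2$, belonging to edges $e_1, e_2 \in E_0$ (with $e_1 \ne e_2$ as edges, though possibly sharing $g$). All of $m, s_1, s_2$ lie in one block $B$. Now $s_1$ and $s_2$ have the same direction (both orthogonal to $m$, and a block has only two orthogonal directions), so their terminals are two distinct terminals of $A$ whose tree-paths in $T$ can be routed through $m$. I would argue that the matching again leaves at most one terminal unmatched, so at least one of the terminals belonging to $s_1, s_2$ is matched, and the corresponding path of $\mathcal{E}(B)$ has its first segment inside $s_1$ or $s_2$, hence represents $e_1$ or $e_2$ — contradicting $e_1, e_2 \in E_0$.

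The main obstacle I anticipate is making the representation argument airtight when several terminals collapse to the same vertex of $V$: after the $\varepsilon$-splitting, a single vertex may contribute many degree-one terminals, and I must ensure that the $\varepsilon$-extension of the matching paths back to $p\in V$ does not destroy edge-disjointness or change which original edge of $E$ a path represents. I expect this to be handled exactly as in the construction preceding the lemma (the extensions are within disjoint $\varepsilon$-neighborhoods), but the bookkeeping — that a matched terminal on the end segment of $e$ yields a path \emph{representing} $e$ in the precise sense defined above — is the delicate step and deserves a careful sentence or two. Everything else is a parity/counting argument on the number of unmatched terminals in each block.
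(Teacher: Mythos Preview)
Your argument is correct and follows essentially the same approach as the paper: both rely on the fact that in each block $B$ of $G$ the matching $\mathcal{E}(B)$ leaves at most one terminal (equivalently, at most one end segment) unmatched, so at most one end segment per block can belong to an edge of $E_0$. The paper packages this as the single intermediate claim ``every block of $G_0$ contains at most one end segment'' and reads off both conclusions at once, whereas you treat the two conclusions separately; but the core counting is identical, and your anticipated obstacle about $\varepsilon$-splitting is indeed handled exactly as in the construction and causes no trouble.
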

\begin{proof}
A block of $G_0$ is a subarrangement of a block of $G$. In every block of $G$, there is at most one end segment whose edge is not represented by some edge in $H=(V,\mathcal{E})$. Consequently, in every block of $G_0$, there is at most one end segment. Both claims follow.
\end{proof}

Partition $G_0=(V,E_0)$ into two subgraphs, denoted $G_1=(V,E_1)$ and $G_2=(V,E_2)$, such that $E_1$ contains all edges in $E$ whose middle segments have negative slopes, and $E_2=E_0\setminus E_1$; see Fig.~\ref{fig:3} for an example.

\begin{lemma}\label{lem:slopes}
In each of $G_1=(V,E_1)$ and $G_2=(V,E_2)$, all crossings are end-middle crossings, and every middle segment has at most one crossing.
\end{lemma}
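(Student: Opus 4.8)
The plan is to argue that within $G_1$ (the case of $G_2$ is symmetric) no two end segments can cross and no two middle segments can cross, so that after invoking Lemma~\ref{lem:nonrep} the only surviving crossings are end-middle, with each middle segment crossed at most once. First I would recall that in $G_0$ — hence in $G_1\subseteq G_0$ — Lemma~\ref{lem:nonrep} already rules out end-end crossings and guarantees that every middle segment is crossed by at most one end segment; so it remains only to rule out middle-middle crossings in $G_1$. Here is where the slope restriction does the work: by construction every edge of $E_1$ has its middle segment of \emph{negative} slope, and by the normalization at the start of Section~\ref{sec:main} no middle segment is horizontal or vertical, so all middle segments in $G_1$ are genuine negative-slope segments. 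Two segments of the same (non-vertical, non-horizontal) slope are parallel and therefore cannot cross transversally at a right angle — indeed they cannot cross at all, since a RAC drawing forbids overlapping collinear edge portions. Hence $G_1$ has no middle-middle crossing.

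Combining the two observations: in $G_1$ there are no end-end crossings (Lemma~\ref{lem:nonrep}) and no middle-middle crossings (negative-slope parallelism), so every crossing in $G_1$ is an end-middle crossing. Moreover, a middle segment $m$ in $G_1$ can be crossed only by end segments, and Lemma~\ref{lem:nonrep} says it is crossed by at most one end segment; since all of $m$'s crossings are of the end-middle type, $m$ has at most one crossing in total. The same argument applies verbatim to $G_2$, whose middle segments all have positive slope and are therefore mutually parallel. This establishes both assertions of the lemma.

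I do not anticipate a serious obstacle here; the content is essentially a bookkeeping combination of Lemma~\ref{lem:nonrep} with the elementary fact that equal-slope segments are parallel. The one point that deserves a careful sentence is why ``parallel'' segments in a RAC drawing truly cannot cross — one must note that a right-angle crossing between parallel segments is impossible, and that partial overlap of two collinear segments is excluded because edges in a drawing intersect only transversally (and only at right angles). With that remark in place the proof is short.
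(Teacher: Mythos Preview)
Your overall strategy matches the paper's: invoke Lemma~\ref{lem:nonrep} to rule out end--end crossings and to bound end--middle crossings on each middle segment, and then use the slope partition to exclude middle--middle crossings. However, the step in which you exclude middle--middle crossings contains a genuine error. You write that the middle segments in $G_1$ are ``mutually parallel'' because they all have negative slope, but having negative slope is not the same as having \emph{the same} slope: two middle segments with slopes $-1$ and $-2$, say, are not parallel and could perfectly well intersect in an arbitrary drawing. So parallelism is not what forbids these crossings.

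The correct observation---and the one the paper's one-line proof relies on---uses the RAC property rather than parallelism: any crossing in a RAC drawing is at a right angle, and two perpendicular lines, neither horizontal nor vertical, must have slopes of opposite signs (if one has slope $m\neq 0$, the other has slope $-1/m$). Since every middle segment in $G_1$ has strictly negative slope by construction, no two of them can be perpendicular, hence no two can cross. With this correction your argument is complete and coincides with the paper's proof; your remark about collinear overlaps is then unnecessary.
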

\begin{proof}
Since $G_1$ (resp., $G_2$) is a RAC$_2$ drawing, where all middle segments have positive (resp., negative) slopes, then the middle segments do not cross. Combined with Lemma~\ref{lem:nonrep}, this implies that all crossings are end-middle crossings, and every middle segment crosses at most one end segment.
\end{proof}

\begin{figure}[htbp]
\centering
\includegraphics[width=0.5\textwidth]{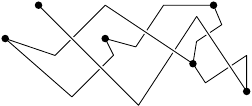}
\caption{A RAC$_2$ drawing: All crossings are between end- and middle-segments, and every middle-segment has positive slope and at most one crossing.} \label{fig:4}
\end{figure}

Bae~et al.~\cite{BaeBCEE0HKMRT18} defined a \textbf{$k$-gap planar} graph, for an integer $k\geq 0$, as a graph $G$ that can be drawn in the plane such that (1) exactly two edges of $G$ cross in any point, (2) each crossing point is \emph{assigned} to one of its two crossing edges, and (3) each edge is assigned with at most $k$ of its crossings.

\begin{lemma}\label{lem:gap}
Both $G_1=(V,E_1)$ and $G_2=(V,E_2)$ are 1-gap planar.
\end{lemma}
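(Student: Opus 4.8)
The plan is to verify, for each $i\in\{1,2\}$, that the drawing of $G_i=(V,E_i)$ inherited from $G$ satisfies the three conditions in the definition of $1$-gap planarity (conditions (1)--(3) of Bae~et al.), feeding in Lemma~\ref{lem:slopes} as the essential structural input. The assignment rule will be the obvious one: assign every crossing point to the edge whose \emph{middle} segment passes through it.

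First I would dispose of condition~(1), namely that exactly two edges meet at every crossing point. One clean way is to deduce it from Lemmas~\ref{lem:nonrep} and~\ref{lem:slopes}: if three edges of $G_i$ passed through a common point $x$, then three pairwise-crossing segments meet at $x$, and by pigeonhole two of them are of the same type, yielding either an end-end crossing (excluded by Lemma~\ref{lem:nonrep}) or a crossing of two middle segments (impossible in $G_i$ by Lemma~\ref{lem:slopes}); the same applies if one of the concurrences is at a bend. Alternatively, one may apply an arbitrarily small perturbation to the whole drawing of $G_i$; a sufficiently small perturbation preserves the set of crossing pairs, hence all the structural properties guaranteed by Lemma~\ref{lem:slopes}, while destroying any accidental concurrence. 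Either way, exactly two edges cross at each crossing point. The same dichotomy also shows the assignment rule is well defined: by Lemma~\ref{lem:slopes} every crossing of $G_i$ is end-middle, so it cannot consist of two end segments (Lemma~\ref{lem:nonrep}) or of two middle segments (Lemma~\ref{lem:slopes}), and therefore exactly one of the two crossing segments is a middle segment. Assigning the crossing to the edge carrying that middle segment gives condition~(2).

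It remains to bound, for condition~(3), the number of crossings assigned to a single edge. Each edge of $G_i$ has exactly one middle segment, and by Lemma~\ref{lem:slopes} that middle segment participates in at most one crossing of $G_i$; hence at most one crossing is assigned to any edge. Taking $k=1$ shows that both $G_1$ and $G_2$ are $1$-gap planar. I do not expect a genuine obstacle here: the statement is essentially immediate from Lemma~\ref{lem:slopes}, and the only points requiring a little care are condition~(1) --- ruling out three edges through a common point, or equivalently justifying the perturbation --- and the check that the ``middle segment'' defining the assignment is uniquely determined at each crossing.
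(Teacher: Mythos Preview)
Your proposal is correct and follows essentially the same approach as the paper: assign each crossing to the edge whose middle segment participates in it, and invoke Lemma~\ref{lem:slopes} to conclude that every crossing is end-middle and each middle segment is crossed at most once. The only difference is that you explicitly verify condition~(1) (no triple crossings), which the paper leaves implicit; your pigeonhole argument via Lemmas~\ref{lem:nonrep} and~\ref{lem:slopes} is a clean way to do this.
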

\begin{proof}
Every crossing is an end-middle crossing by Lemma~\ref{lem:slopes}. Assign each crossing to the edge that contains the middle segment involved in the crossing. Then each edge is assigned with at most one crossing by Lemma~\ref{lem:slopes}; see Fig.~\ref{fig:4}.
\end{proof}

Bae~et al.~\cite{BaeBCEE0HKMRT18} proved that every 1-gap planar graph on $n\geq 3$ vertices has at most $5n-10$ edges, and this bound is the best possible for $n\geq 5$. They have further proved that a multigraph with $n\geq 3$ vertices that has a 1-gap planar drawing in which no two parallel edges are homotopic has at most $5n-10$ edges.
It follows that $G_1$ and $G_2$ each have at most $5n-10$ edges if $n\geq 3$.

\medskip\noindent\textbf{Proof of Theorem~\ref{thm:main}:}
Let $G=(V,E)$ be a RAC$_2$ drawing. Graph $H=(V,\mathcal{E})$ represents at most $2(5n-2)=10n-4$ edges of $G$ by Theorem~\ref{thm:orthogonal}. The remaining edges of $G$ are partitioned between $G_1$ and $G_2$, each containing at most $5n-10$ edges for $n\geq 3$; see Fig.~\ref{fig:4}. Overall, $G$ has at most $20n-24$ edges if $n\geq 3$.
\hfill\qed


\section*{Acknowledgements}
Work on this paper was initiated at the Tenth Annual Workshop on Geometry and Graphs held at the Bellairs Research Institute,  February 3--10, 2023.
The author is grateful to Michael Kaufmann, G\"unter Rote, and Torsten Ueckerdt for stimulating discussions;
and also thanks the reviewers of GD~2023 and JGAA for many helpful comments and suggestions.


\end{document}